\documentclass[11pt]{article}

\usepackage[a4paper,margin=2.4cm]{geometry}
\usepackage{amsmath,amssymb,amsthm,bm}
\usepackage{booktabs}
\usepackage{graphicx}
\usepackage{xcolor}
\usepackage{hyperref}
\usepackage{microtype}
\usepackage{float}
\usepackage{placeins}

\hypersetup{
  colorlinks=true,
  citecolor=blue,
  linkcolor=blue,
  urlcolor=blue
}

\newcommand{\Lag}{\mathcal{L}}

\newcommand{\dd}{\mathrm{d}}
\newcommand{\order}{\mathcal{O}}

\newtheorem{proposition}{Proposition}

\title{\textbf{Colored Black Holes in Logarithmic Nonlinear
Yang--Mills Theory}}

\author{%
Celio R. Muniz\\[2mm]
\small Faculdade de Educa\c{c}\~ao, Ci\^encias e Letras de Iguatu,\\
\small Universidade Estadual do Cear\'a (UECE),\\
\small Iguatu, Cear\'a 63500-000, Brazil\\[1mm]
\small \texttt{celio.muniz@uece.br}
}

\date{}

\begin{document}

\maketitle

\begin{abstract}
We construct static, spherically symmetric, and asymptotically flat colored black holes in four-dimensional Einstein gravity coupled to a logarithmic nonlinear Yang--Mills field with gauge group SU(2). Unlike solutions based on the Wu--Yang ansatz, the gauge sector contains a dynamical radial amplitude, and the configurations arise from a genuinely coupled nonlinear boundary-value problem. An integral identity excludes nontrivial sign-definite solutions approaching a magnetically neutral Yang--Mills vacuum, implying that colored configurations must be nodal. We numerically construct the fundamental one-node branch and verify its convergence to the ordinary Einstein--Yang--Mills colored black hole in the linear limit. Increasing the logarithmic nonlinearity lowers the ADM mass, raises the Hawking temperature, and displaces the gauge-field node toward larger radii. Branch termination is not detected over the full range of parameters investigated. Instead, the exterior geometry approaches Schwarzschild on fixed radial domains, while the colored structure develops an increasingly extended tail. Although no independent asymptotic Yang--Mills charge is present, the non-Abelian hair leaves a subleading imprint on the far-field geometry. Inside the event horizon, the representative nonlinear solutions possess no Cauchy horizon and approach a Schwarzschild-type spacelike curvature singularity, with a finite limiting mass and no oscillary mass inflation. The recovery of the linear theory is therefore nonuniform near the cetonter: solutions that are nearly linear in the exterior eventually enter the strongly logarithmic regime in the deep interior. Finally, turning points of the Hawking temperature produce divergences and sign changes in heat capacity, indicating transitions of local thermodynamic stability.
\end{abstract}

\section{Introduction}
\label{sec:introduction}

The uniqueness theorems of electrovacuum general relativity led to the
expectation that stationary black holes should be characterized by only a
small set of asymptotic charges.  Matter fields with non-Abelian
self-interactions revealed, however, that this expectation is not universal
\cite{Bekenstein:1996pn,Volkov:1998cc}.  The discovery of the globally regular
Bartnik--McKinnon solutions \cite{Bartnik:1988am}, followed by the construction
of their black-hole counterparts \cite{Bizon:1990sr,Volkov:1989fi}, established
that four-dimensional Einstein--Yang--Mills (EYM) theory admits static,
spherically symmetric, asymptotically flat configurations with a nontrivial
radial gauge amplitude.  Rigorous existence results
\cite{Smoller:1991zy,Smoller:1993yq} and extensive numerical studies subsequently
clarified that both the solitons and the black holes form discrete families
labelled by the number of nodes of the Yang--Mills function
\cite{Kunzle:1990is,Volkov:1998cc}.

These configurations are the paradigmatic examples of ``colored'' objects.
Their Yang--Mills field approaches a vacuum at spatial infinity and carries no
independent global non-Abelian charge, although it remains nontrivial in the
bulk.  The corresponding hair is therefore secondary and cannot be inferred
from a Gauss-law charge measured at infinity
\cite{Bizon:1990sr,Volkov:1998cc}.  This evasion of the simplest no-hair
expectations comes at a price: the asymptotically flat EYM solitons and colored
black holes possess sphaleronic and/or gravitational unstable modes
\cite{Straumann:1990as,Straumann:1990wn,Lavrelashvili:1994rp}.
Nevertheless, they remain fundamental laboratories for non-Abelian gravity.
Moreover, changing the asymptotic structure can alter the stability picture:
stable Yang--Mills hair exists for suitable asymptotically anti--de Sitter
solutions \cite{Winstanley:1998sn,Bjoraker:2000qd}.  The colored solutions have
also motivated investigations of isolated-horizon mass relations
\cite{Corichi:1999nw} and of the nontrivial interior dynamics of non-Abelian
black holes \cite{Donets:1996sv,Galtsov:1997kp}.

Nonlinear gauge theories provide a natural setting in which to ask how this
structure changes away from the quadratic Yang--Mills regime.  The prototype
is Born--Infeld electrodynamics \cite{Born:1934gh}, originally introduced to
soften the short-distance behavior of the electromagnetic field.  Its
non-Abelian generalizations support finite-energy flat-space configurations,
gravitating solitons, and black holes
\cite{Galtsov:1999ft,Dyadichev:2000dy,Wirschins:2000tx}.  In particular,
non-Abelian Einstein--Born--Infeld solutions continuously deform the EYM
families and display a nontrivial dependence on the nonlinear scale.
Power-law Yang--Mills sources and related nonlinear models have likewise been
used to construct black holes in Einstein, Gauss--Bonnet, Lovelock, and
modified-gravity theories
\cite{Mazharimousavi:2008ap,Mazharimousavi:2009ti,Mazharimousavi:2011kp}.
These developments show that nonlinear gauge dynamics can modify the domain of
existence, horizon structure, mass, and thermodynamics of Yang--Mills black
holes without changing the gravitational action.

The logarithmic model considered here belongs to the same broad class of
strong-field deformations.  Its Lagrangian,
\(\mathcal{L}(X)=-\beta^2\ln(1+X/\beta^2)\), reduces to ordinary Yang--Mills
theory when \(X/\beta^2\ll1\) and introduces a single nonlinear scale
\(\beta\).  Logarithmic gauge Lagrangians were first studied in the Abelian context as alternatives to Born--Infeld electrodynamics
\cite{Soleng:1995kn}.  More recently, analytic black-hole solutions sourced by Born--Infeld, exponential, and logarithmic nonlinear Yang--Mills fields have been obtained in several dimensions \cite{Jahromi:2023logym}. Those analytic constructions employ a generalized Wu--Yang magnetic ansatz,
as do many exact black holes with standard or nonlinear Yang--Mills sources
\cite{Yasskin:1975ag,Mazharimousavi:2008ap,Mazharimousavi:2009ti,
Mazharimousavi:2011kp}.

The distinction between that sector and the present one is essential.  For a
Wu--Yang configuration the magnetic invariant is an algebraic function of the
radial coordinate, and the matter equations reduce sufficiently for the
metric to be obtained by quadrature.  The configuration is characterized by
an embedded magnetic charge and contains no dynamical radial gauge amplitude.
By contrast, the spherically symmetric \(SU(2)\) ansatz used for colored EYM
black holes contains a function \(w(r)\).  Both \(w'(r)\) and
\(1-w^2(r)\) contribute to the gauge invariant, so the Einstein and nonlinear
Yang--Mills equations form a genuinely coupled boundary-value problem.  The
resulting solutions are not analytic continuations of Wu--Yang black holes:
they have a nodal gauge profile, approach \(w=\pm1\) at infinity, and carry
colored hair rather than an independently specifiable magnetic charge.

In this work we construct static, spherically symmetric, asymptotically flat
colored black holes in four-dimensional Einstein gravity coupled to
logarithmic nonlinear \(SU(2)\) Yang--Mills theory.  We derive the reduced
field equations and the regular nonextremal-horizon data, and obtain the
far-field expansion to sufficiently high order to expose the gravitational
imprint of the hair.  In particular, the metric is Schwarzschild only at
leading order; the first colored correction appears at order \(r^{-4}\), while
the explicitly logarithmic contribution is more strongly suppressed.  We
also prove an integral identity excluding nontrivial sign-definite solutions
that remain within a single vacuum sector when \(P=-\mathcal{L}_X>0\).

We then solve the nonlinear shooting problem and construct the fundamental
one-node branch.  Varying the dimensionless logarithmic parameter and the
horizon radius, we determine the gauge and metric profiles, ADM mass, Hawking
temperature, node position, and thermodynamic response.  The solutions
approach the ordinary EYM colored black holes smoothly as
\(\beta\rightarrow\infty\), whereas stronger logarithmic nonlinearities
produce sizeable changes in their mass, temperature, and radial distribution
of the hair.  A diagnostic continuation inside the event horizon is used to
test for additional zeros of the metric function and to characterize the
termination of the numerical evolution; this analysis supports a single
resolved horizon for the representative solutions, without asserting global
regularity at the central singularity.  Finally, temperature turning points
are used to identify divergences and sign changes of the fixed-\(\beta\) heat
capacity.  These are interpreted as changes of local thermodynamic stability,
not as evidence for a global canonical phase transition.

The paper is organized as follows.  Section~\ref{sec:model} introduces the
logarithmic EYM model.  Section~\ref{sec:ansatz} specifies the spherical
non-Abelian sector and derives the dimensionless field equations.
Section~\ref{sec:boundary} develops the horizon and asymptotic expansions, and
Sec.~\ref{sec:nodal} establishes the nodal restriction.  The numerical method
is described in Sec.~\ref{sec:numerics}, while the one-node families and their
interior diagnostic are presented in Secs.~\ref{sec:results} and the following
section.  The thermodynamic properties are discussed in
Sec.~\ref{sec:thermo}, and the final section summarizes the results and open
questions.

\section{Logarithmic Einstein--Yang--Mills theory}
\label{sec:model}

We consider four-dimensional Einstein gravity minimally coupled to a
logarithmic nonlinear \(SU(2)\) Yang--Mills field. The action is
\begin{equation}
 I=\int \dd^4x\,\sqrt{-g}
 \left[
 \frac{R}{16\pi G}+\Lag(X)
 \right],
 \label{eq:action}
\end{equation}
where \(G\) is Newton's constant and
\begin{equation}
 \Lag(X)
 =-\beta^2\ln\left(1+\frac{X}{\beta^2}\right),
 \qquad
 X=\frac14 F_{\mu\nu}^{a}F^{a\mu\nu}.
 \label{eq:loglag}
\end{equation}
The parameter \(\beta\) sets the nonlinear gauge-field scale, while
\(a=1,2,3\) denotes an adjoint \(SU(2)\) index. Reality of the
Lagrangian requires
\begin{equation}
 1+\frac{X}{\beta^2}>0.
 \label{eq:realitycondition}
\end{equation}
This condition is automatically satisfied in the purely magnetic
black-hole exterior considered below, where \(N\geq0\) and hence
\(X\geq0\).

The non-Abelian field strength is defined by
\begin{equation}
 F_{\mu\nu}^{a}
 =\partial_\mu A_\nu^{a}-\partial_\nu A_\mu^{a}
 +g\,\epsilon^{abc}A_\mu^{b}A_\nu^{c},
 \label{eq:fieldstrength}
\end{equation}
where \(g\) is the Yang--Mills coupling and \(\epsilon^{abc}\) are the
\(SU(2)\) structure constants. It is convenient to introduce the
positive function
\begin{equation}
 P(X)\equiv-\Lag_X
 =\frac{1}{1+X/\beta^2}.
 \label{eq:Pdef}
\end{equation}
In the black-hole exterior, \(0<P\leq1\), with \(P\rightarrow1\) in
the linear Yang--Mills limit. In the interior, \(X\) need not remain
nonnegative, although the reality condition \(1+X/\beta^2>0\) must
still hold.

Variation of the action with respect to the metric gives
\begin{equation}
 G_{\mu\nu}=8\pi G\,T_{\mu\nu},
 \label{eq:einstein}
\end{equation}
where
\begin{equation}
 T_{\mu\nu}
 =
 g_{\mu\nu}\Lag
 +P F_{\mu\lambda}^{a}F_{\nu}^{a\ \lambda}.
 \label{eq:energymomentum}
\end{equation}
Variation with respect to the gauge potential yields the nonlinear
Yang--Mills equation
\begin{equation}
 D_\mu\left(PF^{a\mu\nu}\right)=0,
 \label{eq:gaugegeneral}
\end{equation}
where \(D_\mu\) denotes the gauge-covariant derivative in the adjoint
representation.

Unlike ordinary Yang--Mills theory, the logarithmic model possesses a
nonvanishing stress-energy trace,
\begin{equation}
 T\equiv T^\mu{}_\mu
 =4\left[\Lag(X)+X P(X)\right].
 \label{eq:trace}
\end{equation}
Thus, the nonlinear scale \(\beta\) explicitly breaks the classical
scale invariance of the Yang--Mills sector. The trace vanishes
continuously as the linear theory is recovered.

Indeed, for \(X/\beta^2\ll1\), the logarithmic Lagrangian admits the
expansion
\begin{equation}
 \Lag(X)
 =
 -X+\frac{X^2}{2\beta^2}
 -\frac{X^3}{3\beta^4}
 +\order\left(\frac{X^4}{\beta^6}\right).
 \label{eq:weakexpansion}
\end{equation}
Consequently,
\begin{equation}
 \beta\rightarrow\infty
 \quad\Longrightarrow\quad
 \Lag(X)\rightarrow-X,
 \qquad
 P(X)\rightarrow1,
\end{equation}
and the ordinary Einstein--Yang--Mills theory is recovered.

\section{Spherically symmetric non-Abelian sector}
\label{sec:ansatz}

We consider a static and spherically symmetric spacetime described by
\begin{equation}
 \dd s^2
 =
 -N(r)S^2(r)\dd t^2
 +\frac{\dd r^2}{N(r)}
 +r^2\left(\dd\theta^2+\sin^2\theta\,\dd\phi^2\right),
 \label{eq:metric}
\end{equation}
where
\begin{equation}
 N(r)=1-\frac{2Gm(r)}{r}.
 \label{eq:massfunction}
\end{equation}
Here \(m(r)\) is the dimensional mass function, while \(S(r)\) is the
redshift function. The normalization of the time coordinate will be
fixed by imposing \(S(\infty)=1\).

For the purely magnetic \(SU(2)\) connection, we adopt the standard
spherically symmetric ansatz
\begin{equation}
 A=\frac{1-w(r)}{g}
 \left(
 \tau_\phi\,\dd\theta
 -\tau_\theta\sin\theta\,\dd\phi
 \right),
 \label{eq:gaugeansatz}
\end{equation}
where \(\tau_r\), \(\tau_\theta\), and \(\tau_\phi\) form the usual
angle-dependent spherical basis in the \(SU(2)\) algebra. The
corresponding nonvanishing field-strength components are, up to the
chosen generator convention,
\begin{align}
 F_{r\theta}&=-\frac{w'}{g}\tau_\phi,
 \\
 F_{r\phi}&=\frac{w'}{g}\tau_\theta\sin\theta,
 \\
 F_{\theta\phi}&=\frac{1-w^2}{g}\tau_r\sin\theta.
 \label{eq:Fcomponents}
\end{align}
The radial gauge amplitude \(w(r)\) is a dynamical function determined
by the nonlinear Yang--Mills equation. The configurations considered
here are therefore genuinely colored and are not restricted to the
Wu--Yang sector. The constant values \(w=\pm1\) correspond to
pure-gauge vacuum configurations, whereas \(w=0\) gives an embedded
Abelian magnetic configuration.

Substitution of Eq.~\eqref{eq:gaugeansatz} into the gauge invariant
defined in Eq.~\eqref{eq:loglag} gives
\begin{equation}
 X_{\rm phys}
 =
 \frac{Nw'^2}{g^2r^2}
 +\frac{(1-w^2)^2}{2g^2r^4}.
 \label{eq:Xdimensional}
\end{equation}
The first term originates from the radial-angular components of the
field strength, while the second is associated with the purely angular
magnetic component.

To write the reduced equations in dimensionless form, we introduce the
natural EYM length scale
\begin{equation}
 \ell_{\rm EYM}=\frac{\sqrt{4\pi G}}{g},
 \label{eq:eymscale}
\end{equation}
together with
\begin{equation}
 x=\frac{r}{\ell_{\rm EYM}},
 \qquad
 \mu(x)=\frac{Gm(r)}{\ell_{\rm EYM}},
 \qquad
 b=g\beta\ell_{\rm EYM}^{\,2}.
 \label{eq:dimensionless}
\end{equation}
The dimensionless gauge invariant is
\begin{equation}
 \mathcal{X}
 \equiv g^2\ell_{\rm EYM}^{\,4}X_{\rm phys}.
 \label{eq:Xrescaling}
\end{equation}
In what follows, we drop the symbol \(x\), relabel \(\mu\) as \(m\),
and denote the dimensionless invariant \(\mathcal{X}\) again by \(X\).
The relevant quantities then become
\begin{equation}
 N=1-\frac{2m}{r},
 \qquad
 X=
 \frac{Nw'^2}{r^2}
 +\frac{(1-w^2)^2}{2r^4},
 \qquad
 P=\frac{1}{1+X/b^2},
 \label{eq:dimensionlessXP}
\end{equation}
where all radial derivatives from this point onward are taken with
respect to the dimensionless coordinate \(r\). The parameter \(b\)
measures the strength of the logarithmic corrections: finite and
smaller values of \(b\) correspond to stronger nonlinear effects,
whereas \(b\to\infty\) gives the linear EYM limit.

The Einstein and nonlinear Yang--Mills equations reduce to the coupled
ordinary differential system
\begin{align}
 m'
 &=
 r^2b^2\ln\left(1+\frac{X}{b^2}\right),
 \label{eq:mprime}
 \\
 \frac{S'}{S}
 &=
 \frac{2Pw'^2}{r},
 \label{eq:Sprime}
 \\
 \left(SNPw'\right)'
 &=
 \frac{SP}{r^2}w(w^2-1).
 \label{eq:wprime}
\end{align}
Equations~\eqref{eq:mprime}--\eqref{eq:wprime}, supplemented by regular
horizon data and asymptotically flat boundary conditions, determine
the three functions \(m(r)\), \(S(r)\), and \(w(r)\).

As a consistency check, when \(b\rightarrow\infty\),
\begin{equation}
 P\rightarrow1,
 \qquad
 b^2\ln\left(1+\frac{X}{b^2}\right)\rightarrow X,
\end{equation}
and the reduced equations become
\begin{align}
 m'
 &=
 Nw'^2+\frac{(1-w^2)^2}{2r^2},
 \label{eq:mprimeEYM}
 \\
 \frac{S'}{S}
 &=
 \frac{2w'^2}{r},
 \label{eq:SprimeEYM}
 \\
 (SNw')'
 &=
 \frac{S}{r^2}w(w^2-1),
 \label{eq:wprimeEYM}
\end{align}
which are precisely the standard magnetic EYM equations.

\section{Boundary conditions and local solutions}
\label{sec:boundary}

The reduced equations form a nonlinear boundary-value problem.
A colored black-hole solution must be analytic at a nonextremal event
horizon and approach a magnetically neutral Yang--Mills vacuum at
spatial infinity. We derive below the corresponding local expansions,
which provide both the initial data for the numerical integration and
the asymptotic identification of the non-Abelian hair.

\subsection{Near-horizon expansion}
\label{sec:near-horizon}

Let \(r=r_h\) be a nonextremal event horizon satisfying
\begin{equation}
 N(r_h)=0,
 \qquad
 N'_h>0.
 \label{eq:horizon_conditions}
\end{equation}
Since \(N=1-2m/r\) in dimensionless variables, the first condition
implies
\begin{equation}
 m(r_h)=\frac{r_h}{2}.
 \label{eq:horizon_mass}
\end{equation}
Writing \(\Delta=r-r_h\), we introduce the local expansions
\begin{align}
 m(r)
 &=
 \frac{r_h}{2}+m_1\Delta+m_2\Delta^2
 +\order(\Delta^3),
 \label{eq:horizon_expansion_m}
 \\
 N(r)
 &=
 N_1\Delta+N_2\Delta^2+\order(\Delta^3),
 \label{eq:horizon_expansion_N}
 \\
 w(r)
 &=
 w_h+w_1\Delta+w_2\Delta^2+\order(\Delta^3),
 \label{eq:horizon_expansion_w}
 \\
 S(r)
 &=
 S_h\left[
 1+s_1\Delta+s_2\Delta^2+\order(\Delta^3)
 \right].
 \label{eq:horizon_expansion_S}
\end{align}

At the horizon, the derivative contribution to the Yang--Mills
invariant vanishes because \(N(r_h)=0\). Hence,
\begin{equation}
 X_h=\frac{(1-w_h^2)^2}{2r_h^4},
 \qquad
 P_h=\frac{1}{1+X_h/b^2}.
 \label{eq:XP_horizon}
\end{equation}
The leading coefficients follow directly from the reduced field
equations:
\begin{align}
 m_1
 &=
 r_h^2b^2
 \ln\left[
 1+\frac{(1-w_h^2)^2}{2b^2r_h^4}
 \right],
 \label{eq:m1_horizon}
 \\
 N_1
 &=
 \frac{1-2m_1}{r_h},
 \label{eq:N1_horizon}
 \\
 w_1
 &=
 \frac{w_h(w_h^2-1)}{r_h^2N_1},
 \label{eq:w1_horizon}
 \\
 s_1
 &=
 \frac{2P_hw_1^2}{r_h}.
 \label{eq:s1_horizon}
\end{align}
Notice that \(P_h\) cancels from the leading-order Yang--Mills
equation, although the logarithmic nonlinearity still affects \(w_1\)
indirectly through \(N_1\).

The nonextremality condition \(N_1>0\) becomes
\begin{equation}
 1-2r_h^2b^2
 \ln\left[
 1+\frac{(1-w_h^2)^2}{2b^2r_h^4}
 \right]>0.
 \label{eq:nonextremality}
\end{equation}
For fixed \(r_h\) and \(b\), this inequality restricts the admissible
values of the shooting parameter \(w_h\). The extremal case
\(N_1=0\) is not described by the present expansion and would require
a separate near-horizon analysis.

For completeness, the second-order coefficients can also be obtained
algebraically. Defining
\begin{equation}
 X_1=
 \frac{N_1w_1^2}{r_h^2}
 -\frac{2w_hw_1(1-w_h^2)}{r_h^4}
 -\frac{2(1-w_h^2)^2}{r_h^5},
 \label{eq:X1_horizon}
\end{equation}
we find
\begin{equation}
 P_1=-\frac{P_h^2}{b^2}X_1,
 \qquad
 m_2=\frac{m_1}{r_h}+\frac{r_h^2}{2}P_hX_1,
 \qquad
 N_2=-\frac{N_1}{r_h}-\frac{2m_2}{r_h}.
 \label{eq:secondorder_metric}
\end{equation}
Introducing
\begin{equation}
 Q_h=w_h(w_h^2-1),
 \qquad
 Q'_h=3w_h^2-1,
 \label{eq:horizon_Q}
\end{equation}
and
\begin{equation}
 \mathcal{R}_1=
 \frac{(s_1P_h+P_1)Q_h}{r_h^2}
 +\frac{P_hQ'_hw_1}{r_h^2}
 -\frac{2P_hQ_h}{r_h^3},
 \label{eq:R1_horizon}
\end{equation}
the second-order gauge coefficient is
\begin{equation}
 w_2=
 \frac{
 \mathcal{R}_1
 -2P_hN_2w_1
 -2N_1P_1w_1
 -2s_1N_1P_hw_1
 }
 {4N_1P_h}.
 \label{eq:w2_horizon}
\end{equation}
Finally, defining
\begin{equation}
 a_1=
 2\left[
 \frac{P_1w_1^2}{r_h}
 +\frac{4P_hw_1w_2}{r_h}
 -\frac{P_hw_1^2}{r_h^2}
 \right],
 \label{eq:a1_horizon}
\end{equation}
we obtain
\begin{equation}
 s_2=\frac{a_1+s_1^2}{2}.
 \label{eq:s2_horizon}
\end{equation}
The first-order coefficients are sufficient to initialize the outward
integration, whereas the second-order expansion provides an independent
check of horizon regularity and numerical convergence.

After fixing the asymptotic normalization \(S(\infty)=1\), the Hawking
temperature is
\begin{equation}
 T_H=\frac{S_hN_1}{4\pi\ell_{\rm EYM}}.
 \label{eq:Hawking_temperature_horizon}
\end{equation}
In the numerical integration, it is convenient to set the unnormalized
horizon value to \(S_h=1\). If the resulting asymptotic value is
\(S_\infty\), the physical normalization is obtained by
\(S\rightarrow S/S_\infty\), and therefore
\begin{equation}
 \tau\equiv\ell_{\rm EYM}T_H
 =\frac{N_1}{4\pi S_\infty}.
 \label{eq:Hawking_temperature_normalized}
\end{equation}

\subsection{Asymptotic expansion and colored-hair imprint}
\label{sec:asymptotic-expansion}

Asymptotic flatness and the normalization of the time coordinate require
\begin{equation}
 N(r)\rightarrow1,
 \qquad
 S(r)\rightarrow1,
 \qquad
 m(r)\rightarrow M,
 \label{eq:asymptotic_flatness}
\end{equation}
where \(M\) is the dimensionless ADM mass. The Yang--Mills equation
allows the constant asymptotic values
\begin{equation}
 w(\infty)=0,\ \pm1.
\end{equation}
The choice \(w(\infty)=0\) describes an embedded Abelian configuration
with nonzero magnetic charge. In contrast, colored black holes are
magnetically neutral and approach one of the pure-gauge vacua,
\begin{equation}
 w(\infty)=\varepsilon,
 \qquad
 \varepsilon=\pm1.
 \label{eq:YM_vacua}
\end{equation}
For the one-node branch studied below, we choose \(w_h>0\) and
\(w(\infty)=-1\).

We therefore write
\begin{equation}
 w(r)=
 -1+\frac{c}{r}
 +\frac{d}{r^2}
 +\frac{e}{r^3}
 +\order(r^{-4}),
 \label{eq:w_asymptotic_general}
\end{equation}
where \(c\) controls the leading asymptotic colored-field amplitude.
Substitution into the field equations determines
\begin{equation}
 d=\frac{3}{2}Mc-\frac{3}{4}c^2,
 \qquad
 e=
 \frac{12}{5}M^2c
 -\frac{21}{10}Mc^2
 +\frac{11}{20}c^3.
 \label{eq:de_asymptotic}
\end{equation}
Thus,
\begin{equation}
 w(r)=
 -1+\frac{c}{r}
 +\frac{\frac32Mc-\frac34c^2}{r^2}
 +\frac{
 \frac{12}{5}M^2c
 -\frac{21}{10}Mc^2
 +\frac{11}{20}c^3
 }{r^3}
 +\order(r^{-4}).
 \label{eq:w_asymptotic}
\end{equation}

The corresponding gravitational functions are
\begin{align}
 m(r)
 &=
 M-\frac{c^2}{r^3}
 +\frac{2c^3-\frac52Mc^2}{r^4}
 +\order(r^{-5}),
 \label{eq:m_asymptotic}
 \\
 N(r)
 &=
 1-\frac{2M}{r}
 +\frac{2c^2}{r^4}
 +\frac{5Mc^2-4c^3}{r^5}
 +\order(r^{-6}),
 \label{eq:N_asymptotic}
 \\
 S(r)
 &=
 1-\frac{c^2}{2r^4}
 +\frac{6c^3-12Mc^2}{5r^5}
 +\order(r^{-6}).
 \label{eq:S_asymptotic}
\end{align}
Consequently,
\begin{equation}
 -g_{tt}=N(r)S^2(r)
 =
 1-\frac{2M}{r}
 +\frac{c^2}{r^4}
 +\frac{11Mc^2-8c^3}{5r^5}
 +\order(r^{-6}).
 \label{eq:gtt_asymptotic}
\end{equation}
The spacetime is therefore Schwarzschild only at leading order. The
first gravitational imprint of the colored hair appears at order
\(r^{-4}\), rather than through a Reissner--Nordstr\"om-type
\(r^{-2}\) term.

The coefficient \(c\) is not an independently conserved charge. Once
\(r_h\), \(b\), the node number, and the asymptotic vacuum are fixed,
\(c\) is determined by the global shooting problem. It therefore
characterizes secondary non-Abelian hair.

The Yang--Mills invariant behaves asymptotically as
\begin{equation}
 X(r)=\frac{3c^2}{r^6}+\order(r^{-7}),
 \label{eq:X_asymptotic}
\end{equation}
so that
\begin{equation}
 P(r)
 =
 1-\frac{3c^2}{b^2r^6}
 +\order(r^{-7}).
 \label{eq:P_asymptotic}
\end{equation}
Moreover,
\begin{equation}
 b^2\ln\left(1+\frac{X}{b^2}\right)
 =
 X-\frac{X^2}{2b^2}
 +\order\left(\frac{X^3}{b^4}\right),
 \label{eq:L_asymptotic}
\end{equation}
and hence the leading explicitly logarithmic correction decays as
\begin{equation}
 \frac{X^2}{2b^2}
 =
 \frac{9c^4}{2b^2r^{12}}
 +\order(r^{-13}).
 \label{eq:logarithmic_asymptotic_correction}
\end{equation}
After radial integration in the mass equation, this produces metric
corrections beginning schematically at
\begin{equation}
 \delta N_{\rm log}
 =
 \order\left(\frac{1}{b^2r^{10}}\right).
 \label{eq:log_metric_order}
\end{equation}
The logarithmic interaction can therefore modify the near-horizon and
intermediate regions substantially while becoming strongly suppressed
in the far field, where the solution rapidly approaches its ordinary
EYM asymptotics.

No regular expansion about \(r=0\) is required for the black-hole
solutions, whose exterior domain begins at \(r=r_h>0\). A regular
origin would instead define a distinct particle-like solitonic problem,
analogous to the Bartnik--McKinnon solutions, which is not considered
here.

\section{Analytic restrictions on nodeless configurations}
\label{sec:nodal}

Before solving the boundary-value problem numerically, one can derive
a useful restriction directly from the Yang--Mills equation. The
argument relies only on asymptotic flatness and on the positivity of
\(S\), \(N\), and \(P\) throughout the black-hole exterior. It is
therefore not specific to the detailed logarithmic form of the
Lagrangian.

\begin{proposition}
\label{prop:nodeless}
Let \(w(r)\) be an asymptotically flat, magnetically neutral solution
with
\begin{equation}
 w(\infty)=\varepsilon,
 \qquad
 \varepsilon=\pm1.
\end{equation}
If
\begin{equation}
 \varepsilon w(r)>0
\end{equation}
everywhere outside a nonextremal event horizon, then
\begin{equation}
 w(r)\equiv\varepsilon.
\end{equation}
\end{proposition}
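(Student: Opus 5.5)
The plan is to turn the nonlinear Yang--Mills equation \eqref{eq:wprime} into an integral identity whose two sides have opposite signs. Set \(u=\varepsilon w\). Because the equation is odd in \(w\), \(u\) satisfies the same equation,
\begin{equation*}
 \left(SNPu'\right)'=\frac{SP}{r^2}\,u(u^2-1),
\end{equation*}
with \(u>0\) on \([r_h,\infty)\) and \(u\to1\) as \(r\to\infty\).

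The key choice is the multiplier \(f=(u^2-1)/u=u-1/u\). It is well defined precisely because \(u>0\). It satisfies \(f'=u'(1+u^{-2})\), and on the right-hand side it produces \(f\,u(u^2-1)=(u^2-1)^2\geq0\). Multiplying by \(f\) and integrating over \([r_h,R]\) gives
\begin{equation*}
 \Bigl[SNPu'f\Bigr]_{r_h}^{R}
 -\int_{r_h}^{R}SNP\,u'^2\left(1+\frac{1}{u^2}\right)\dd r
 =\int_{r_h}^{R}\frac{SP}{r^2}(u^2-1)^2\,\dd r .
\end{equation*}

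Next I show that the boundary term vanishes in the limit \(R\to\infty\).

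At the horizon, \(N(r_h)=0\). By the regular nonextremal expansion of Sec.~\ref{sec:near-horizon}, \(u'\), \(S\) and \(P\) are finite there, since \(w_1\) in \eqref{eq:w1_horizon} is finite when \(N_1>0\). In addition, \(f(r_h)\) is finite because \(u_h>0\). So the lower boundary term is zero.

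At infinity, I use the asymptotic expansion \eqref{eq:w_asymptotic}. It gives \(u'=\order(r^{-2})\) and \(f=\order(r^{-1})\), while \(S\), \(N\) and \(P\) tend to \(1\). Hence \(SNPu'f\to0\).

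Passing to the limit, the left-hand side equals \(-\int SNPu'^2(1+u^{-2})\,\dd r\le0\), because \(S,N,P>0\) in the exterior. The right-hand side is \(\geq0\). Both integrals must therefore vanish. Positivity of \(SP/r^2\) then forces \(u^2\equiv1\), so \(u\equiv1\) and \(w\equiv\varepsilon\), which is the statement of Proposition~\ref{prop:nodeless}. Each integrand is nonnegative, so monotone convergence justifies the limit, and the argument uses nothing about \(P\) beyond its positivity.

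The main obstacle I expect is the rigorous control of the boundary terms.

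\emph{Horizon.} The hypothesis \(\varepsilon w>0\) must be understood to include the horizon itself, i.e.\ \(u_h>0\). This is needed so that \(f\) does not blow up at \(r_h\). If \(u_h=0\) were allowed, I would instead note that \(w_h=0\) gives \(w_1=0\), and then invoke uniqueness of the regular horizon solution to conclude \(w\equiv0\), which contradicts \(w(\infty)=\varepsilon\).

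\emph{Infinity.} Here I rely on the asymptotic expansion \eqref{eq:w_asymptotic}, i.e.\ on the assumption that the solution genuinely approaches the vacuum at the \(1/r\) rate. Without that expansion, I would at least need \(u'\to0\) fast enough that \(u'f\to0\).
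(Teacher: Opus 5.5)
Your proof is correct. It uses the same strategy as the paper: multiply Eq.~\eqref{eq:wprime} by a function that vanishes at the vacuum, integrate by parts, and compare the signs of the two sides. The difference is the choice of multiplier.

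The paper multiplies by the linear factor \(w-\varepsilon\), which gives
\[
 -\int_{r_h}^{\infty}SNPw'^2\,\dd r=\int_{r_h}^{\infty}\frac{SP}{r^2}\,w(w-\varepsilon)^2(w+\varepsilon)\,\dd r .
\]
The sign hypothesis then enters only through \(w(w+\varepsilon)\geq0\). Equality forces \(w'\equiv0\) through the gradient term, and the asymptotic condition fixes \(w\equiv\varepsilon\).

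Your multiplier \(f=u-1/u\) turns the potential term into the perfect square \((u^2-1)^2\). The vanishing of the right-hand side then yields \(u^2\equiv1\) directly, and the gradient term only picks up the harmless positive weight \(1+u^{-2}\).

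The cost is that \(f\) is singular at \(u=0\). You therefore need \(u_h>0\) strictly, and you must handle \(w_h=0\) separately. That relies on uniqueness of the regular horizon initial-value problem at the singular point \(N=0\), a result the paper never has to invoke. The paper's polynomial multiplier is bounded everywhere, so it needs no such appeal. It even proves the slightly stronger statement under the non-strict hypothesis \(\varepsilon w\geq0\).

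Your boundary-term estimates at the horizon and at infinity, and your limiting argument, match the paper's and are fine.
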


\begin{proof}
Consider first the sector \(w(\infty)=1\). Multiplying
Eq.~\eqref{eq:wprime} by \(w-1\) and integrating from the horizon to
spatial infinity gives
\begin{align}
 \left[
 (w-1)SNPw'
 \right]_{r_h}^{\infty}
 -
 \int_{r_h}^{\infty}SNPw'^2\,\dd r
 &=
 \int_{r_h}^{\infty}
 \frac{SP}{r^2}(w-1)w(w^2-1)\,\dd r.
\end{align}
The boundary term vanishes. At the horizon this follows from
\(N(r_h)=0\) and the regularity of \(w\) and \(w'\), while at infinity
it follows from \(w-1=\order(r^{-1})\) and
\(w'=\order(r^{-2})\). Consequently,
\begin{equation}
 -\int_{r_h}^{\infty}SNPw'^2\,\dd r
 =
 \int_{r_h}^{\infty}
 \frac{SP}{r^2}w(w-1)^2(w+1)\,\dd r.
 \label{eq:nodalidentity_positive}
\end{equation}
If \(w>0\) throughout the exterior, the left-hand side is
nonpositive, whereas the right-hand side is nonnegative. Equality can
therefore hold only if
\begin{equation}
 w'\equiv0.
\end{equation}
The asymptotic condition then implies \(w\equiv1\).

For \(w(\infty)=-1\), we instead multiply
Eq.~\eqref{eq:wprime} by \(w+1\). The same integration by parts gives
\begin{equation}
 -\int_{r_h}^{\infty}SNPw'^2\,\dd r
 =
 \int_{r_h}^{\infty}
 \frac{SP}{r^2}w(w+1)^2(w-1)\,\dd r.
 \label{eq:nodalidentity_negative}
\end{equation}
When \(w<0\), the right-hand side is again nonnegative, and hence
\(w'\equiv0\). The boundary condition now yields \(w\equiv-1\).
\end{proof}

An immediate consequence is that every nontrivial, magnetically
neutral colored black hole must possess at least one zero of \(w(r)\).
Indeed, a solution approaching \(w(\infty)=\varepsilon\) has the sign
of \(\varepsilon\) sufficiently far from the black hole. If it
preserved that sign throughout the exterior, Proposition
\ref{prop:nodeless} would force it to be the corresponding vacuum
solution. A nontrivial configuration must therefore cross \(w=0\).

The nodes are necessarily simple. Suppose that at some exterior point
\(r=r_0>r_h\),
\begin{equation}
 w(r_0)=0,
 \qquad
 w'(r_0)=0.
\end{equation}
Since \(N(r_0)>0\), the reduced field equations form a regular
initial-value problem at \(r_0\). Uniqueness would then imply
\(w(r)\equiv0\), corresponding to the embedded magnetic configuration,
which is incompatible with \(w(\infty)=\pm1\). Therefore, a colored
solution can only cross zero with
\begin{equation}
 w'(r_0)\neq0.
\end{equation}

The number of nodes,
\begin{equation}
 n=\#\left\{r\in(r_h,\infty):w(r)=0\right\},
 \label{eq:node_number}
\end{equation}
provides a discrete label for the colored families. Since the zeros
are simple, \(n\) remains unchanged under continuous variations of
the parameters along a regular solution branch, except possibly at a
branch endpoint or when the boundary conditions cease to be
satisfied. The solutions constructed below belong to the fundamental
colored branch with \(n=1\).

Further information follows by considering an exterior stationary
point \(r=r_\ast\), where \(w'(r_\ast)=0\). At such a point,
Eq.~\eqref{eq:wprime} reduces to
\begin{equation}
 w''(r_\ast)
 =
 \frac{w_\ast(w_\ast^2-1)}
 {N(r_\ast)r_\ast^2}.
 \label{eq:stationary_w}
\end{equation}
Since \(N(r_\ast)>0\), the sign of \(w''\) is fixed entirely by
\(w_\ast(w_\ast^2-1)\). Consequently,
\begin{align}
 0<w_\ast<1
 &\quad\Longrightarrow\quad
 w''(r_\ast)<0,
 \\
 -1<w_\ast<0
 &\quad\Longrightarrow\quad
 w''(r_\ast)>0,
 \\
 w_\ast>1
 &\quad\Longrightarrow\quad
 w''(r_\ast)>0,
 \\
 w_\ast<-1
 &\quad\Longrightarrow\quad
 w''(r_\ast)<0.
 \label{eq:stationary_restrictions}
\end{align}
Thus, within \(0<w<1\), a stationary point can only be a local
maximum, whereas within \(-1<w<0\) it can only be a local minimum.
These restrictions provide useful diagnostics for the numerical
shooting procedure and help distinguish the physical nodal branches
from overshooting solutions.

The preceding results depend on the nonlinear theory only through
\(P>0\). They consequently apply to a broader class of monotonic
nonlinear Yang--Mills models satisfying
\begin{equation}
 -\Lag_X>0
\end{equation}
in the relevant field domain. In the logarithmic model, this condition
holds automatically for the purely magnetic configurations considered
here.

\section{Numerical construction}
\label{sec:numerics}

The reduced field equations constitute a nonlinear two-point
boundary-value problem. We solve them by shooting outward from the
event horizon and adjusting the horizon value \(w_h\) until the
required Yang--Mills vacuum is reached at spatial infinity.

Unless otherwise stated, the radial profiles and representative
solutions are constructed with
\begin{equation}
 r_h=1.
 \label{eq:rhunit}
\end{equation}
For fixed \(b\), the only nontrivial shooting parameter is then
\(w_h\). The initial value of \(S_h\) is arbitrary before the
asymptotic normalization of the time coordinate and is set to unity
during the integration. The horizon expansions
\eqref{eq:horizon_expansion_m}--\eqref{eq:horizon_expansion_S},
together with Eqs.~\eqref{eq:m1_horizon}--\eqref{eq:s1_horizon},
provide regular initial data at
\begin{equation}
 r=r_h+\epsilon,
 \qquad
 \epsilon=2\times10^{-6}.
 \label{eq:numerical_epsilon}
\end{equation}
The first-order horizon expansion is sufficient at this value of
\(\epsilon\); the second-order coefficients derived in
Sec.~\ref{sec:near-horizon} were used as an independent local
consistency check.

For the fundamental colored branch, we choose
\begin{equation}
 w_h>0,
 \qquad
 w(\infty)=-1,
 \qquad
 n=1.
 \label{eq:fundamental_branch}
\end{equation}
The asymptotic expansion \eqref{eq:w_asymptotic} contains a decaying
mode proportional to \(1/r\) and an inadmissible growing mode. At a
finite outer boundary \(R\), the latter can be eliminated to leading
order by imposing the Robin condition
\begin{equation}
 \mathcal{R}_R
 \equiv
 w'(R)+\frac{w(R)+1}{R}=0.
 \label{eq:shootcondition}
\end{equation}
This condition is preferable to imposing \(w(R)=-1\) directly because
the physical solution reaches its vacuum value through the algebraic
tail
\begin{equation}
 w(r)+1=\frac{c}{r}+\order(r^{-2}).
\end{equation}

The shooting parameter is first scanned over an interval compatible
with the nonextremality condition \eqref{eq:nonextremality}. Once a
sign change of \(\mathcal{R}_R\) is identified, the corresponding root
is refined with a bracketing algorithm. Solutions that fail to reach
the outer boundary, violate exterior regularity, or possess a node
number different from the desired value are discarded. The position
of each node is subsequently obtained by root finding on the
interpolated numerical profile.

After the outward integration, the redshift function is normalized
according to
\begin{equation}
 S(r)\longrightarrow\frac{S(r)}{S(R)}.
 \label{eq:Snormalization}
\end{equation}
The dimensionless Hawking temperature is then computed from
Eq.~\eqref{eq:Hawking_temperature_normalized}, and the dimensionless
ADM mass is approximated by
\begin{equation}
 M\simeq m(R).
 \label{eq:numerical_mass}
\end{equation}
Since \(m(r)=M+\order(r^{-3})\), the omitted asymptotic mass tail is
strongly suppressed at the adopted outer boundary.

For numerical integration, the gauge equation is written as an
explicit first-order system. Defining
\begin{equation}
 y(r)=w'(r),
\end{equation}
we use
\begin{equation}
 N'=\frac{1-N}{r}-\frac{2m'}{r},
 \qquad
 \frac{S'}{S}=\frac{2Py^2}{r}.
 \label{eq:numerical_metric_derivatives}
\end{equation}
Because \(P=P(X)\) and \(X\) depends on \(y^2\), differentiation of
\(P\) introduces a term proportional to \(y'=w''\). To isolate it,
we write
\begin{equation}
 X'=X'_0+\frac{2Nyy'}{r^2},
 \label{eq:Xprime_decomposition}
\end{equation}
where
\begin{equation}
 X'_0=
 \frac{N'y^2}{r^2}
 -\frac{2Ny^2}{r^3}
 +\frac{2w(w^2-1)y}{r^4}
 -\frac{2(1-w^2)^2}{r^5}.
 \label{eq:Xprime0}
\end{equation}
Substituting
\begin{equation}
 P'=-\frac{P^2}{b^2}X'
\end{equation}
into Eq.~\eqref{eq:wprime} and collecting all terms proportional to
\(y'\) gives
\begin{equation}
 y'=
 \frac{
 \displaystyle
 \frac{Pw(w^2-1)}{r^2}
 -N'Py
 -\frac{S'}{S}NPy
 +\frac{NP^2y}{b^2}X'_0
 }
 {\displaystyle
 NP\left(
 1-\frac{2PNy^2}{b^2r^2}
 \right)}.
 \label{eq:explicit_wsecond}
\end{equation}
Thus, the terms generated by the implicit \(w''\)-dependence of \(P'\)
are retained algebraically rather than neglected or treated
perturbatively.

The radial profiles and parameter scans presented below were computed
with
\begin{equation}
 R=80,
\end{equation}
using a maximum integration step \(\Delta r=0.1\). The residual scan
was performed with relative tolerance \(10^{-8}\), while the final
profiles were recomputed with relative and absolute tolerances of
approximately \(3\times10^{-10}\) and \(3\times10^{-12}\),
respectively. For the reported solutions,
\begin{equation}
 \left|\mathcal{R}_R\right|<10^{-8}.
 \label{eq:residual_bound}
\end{equation}
The representative radial profiles and the horizon-family scans were
computed using the outer boundary \(R=80\). The numerical quantities
reported in Table~\ref{tab:numerical-solutions} were subsequently
recomputed with \(R=200\), in order to reduce the residual uncertainty
associated with truncating the asymptotic region. The convergence tests
described below were performed independently using
\(R=40,\,80,\) and \(120\).

To assess the dependence on the numerical boundary, representative
solutions with \(r_h=1\) and
\begin{equation}
 b=0.25,\ 1,\ 2,\ 100
\end{equation}
were recomputed for
\begin{equation}
 R=40,\ 80,\ 120.
\end{equation}
Increasing the outer boundary from \(R=80\) to \(R=120\) produced the
maximum absolute variations
\begin{align}
 |\Delta w_h|&<1.2\times10^{-4},
 &
 |\Delta M|&<6.8\times10^{-5},
 \\
 |\Delta r_{\rm node}|&<7.3\times10^{-5},
 &
 |\Delta\tau|&<7.3\times10^{-8}.
 \label{eq:radial_convergence}
\end{align}
The largest finite-boundary sensitivity occurs for the strongly
nonlinear case \(b=0.25\). For \(b\geq1\), the variation in the mass
between \(R=80\) and \(R=120\) remains at or below approximately
\(1.2\times10^{-5}\).

We also varied the integration tolerances from the
\(10^{-7}\)--\(10^{-8}\) level to the \(10^{-10}\)--\(10^{-11}\)
level. The resulting changes satisfy
\begin{align}
 |\Delta w_h|&<7.4\times10^{-9},
 &
 |\Delta M|&<5.7\times10^{-10},
 \\
 |\Delta r_{\rm node}|&<2.1\times10^{-8},
 &
 |\Delta\tau|&<4.7\times10^{-10}.
 \label{eq:tolerance_convergence}
\end{align}
The dominant numerical uncertainty is therefore associated with the
finite outer boundary rather than with the local integration accuracy.

In every tested case, the solution retained exactly one simple node
and satisfied
\begin{equation}
 N(r)>0,
 \qquad
 S(r)>0,
 \qquad
 r>r_h.
 \label{eq:exterior_numerical_regular}
\end{equation}

As an independent validation of the numerical implementation, we also
solved the ordinary EYM equations directly by setting \(P=1\), rather
than approximating the linear limit by a large but finite value of
\(b\). For the fundamental one-node solution with \(r_h=1\) and
\(R=200\), we obtained
\begin{equation}
 w_h^{\rm EYM}=0.63220697,
 \qquad
 M^{\rm EYM}=0.93719027,
 \label{eq:EYM_benchmark}
\end{equation}
together with
\begin{equation}
 r_{\rm node}^{\rm EYM}=2.21544085,
 \qquad
 \ell_{\rm EYM}T_H^{\rm EYM}=0.02941139.
 \label{eq:EYM_benchmark_global}
\end{equation}
The horizon value agrees with the published value
\(w_h\simeq0.6322\) for the fundamental EYM colored black hole with
\(r_h=1\)~\cite{Breitenlohner:1997ud}. Moreover, the \(b=100\)
solution reported below differs from the direct EYM benchmark by less
than \(2.1\times10^{-5}\) in relative terms for all the quantities
listed in Table~\ref{tab:numerical-solutions}. This agreement provides
a quantitative check of both the nonlinear solver and its
\(b\rightarrow\infty\) limit.

These checks support the numerical robustness of the one-node colored
families presented in the following sections.

\section{One-node colored black holes}
\label{sec:results}

We now present the fundamental one-node colored black-hole solutions.
We first fix \(r_h=1\) and examine how the logarithmic parameter \(b\)
deforms the gauge and metric profiles. We then allow the horizon radius
to vary and construct continuous families at fixed \(b\).

Representative numerical data are summarized in
Table~\ref{tab:numerical-solutions}. Each solution begins with
\(w_h>0\), crosses zero exactly once, and approaches the opposite
Yang--Mills vacuum, \(w(\infty)=-1\). Since
\(1-w^2\rightarrow0\) asymptotically, these solutions carry no
independently specifiable Yang--Mills magnetic charge, in contrast
with the logarithmic Wu--Yang sector~\cite{Jahromi:2023logym}.
The values in the table were recomputed using \(R=200\), whereas the
radial profiles and parameter scans shown below use \(R=80\).

\begin{table}[!htbp]
\centering
\caption{Representative numerical data for the fundamental one-node
colored black-hole solutions with \(r_h=1\). The entries with \(b\geq0.25\) were computed using \(R=200\),
whereas the strongly nonlinear \(b=0.1\) solution required
\(R=500\) because of its substantially longer Yang--Mills tail.}
\label{tab:numerical-solutions}
\begin{tabular}{ccccccc}
\hline\hline
\(b\) &
\(w_h\) &
\(w'_h\) &
\(M\) &
\(r_{\rm node}\) &
\(\ell_{\rm EYM}T_H\) &
\(\lvert\mathcal{R}_R\rvert\)
\\
\hline
0.10
& 0.421258
& $-0.372988$
& 0.754248
& 5.695213
& 0.072983
& $<10^{-8}$
\\
0.25
& 0.517680
& \(-0.478557\)
& 0.851673
& 3.293232
& 0.057739
& \(<10^{-8}\)
\\
0.50
& 0.594227
& \(-0.552343\)
& 0.904791
& 2.517687
& 0.041966
& \(<10^{-8}\)
\\
1
& 0.622004
& \(-0.581748\)
& 0.927804
& 2.290075
& 0.033221
& \(<10^{-8}\)
\\
2
& 0.629599
& \(-0.590369\)
& 0.934747
& 2.233835
& 0.030412
& \(<10^{-8}\)
\\
100
& 0.632206
& \(-0.593353\)
& 0.937189
& 2.215419
& 0.029412
& \(<10^{-8}\)
\\
\hline\hline
\end{tabular}
\end{table}

The dependence on \(b\) is smooth. As \(b\) increases, \(w_h\), the magnitude of \(w'_h\), and the total mass approach their EYM values, while the node moves toward the horizon and the temperature decreases
toward its EYM limit. The case \(b=100\) is already numerically close to this regime. Relative to \(b=100\), the strongly nonlinear case \(b=0.25\) gives approximately
\begin{equation}
 \frac{M_{0.25}-M_{100}}{M_{100}}
 \simeq-9.1\%,
 \qquad
 \frac{T_{0.25}-T_{100}}{T_{100}}
 \simeq+96.3\%,
 \qquad
 \frac{r_{{\rm node},0.25}-r_{{\rm node},100}}
 {r_{{\rm node},100}}
 \simeq+48.7\%.
 \label{eq:relative_log_deformation}
\end{equation}
We also continued the fundamental branch into the more strongly nonlinear regime \(b<0.25\). No lower critical value of \(b\) was found: regular one-node exterior solutions were obtained down to \(b=0.002\). As \(b\) decreases, the ADM mass approaches the Schwarzschild value \(M=r_h/2\), while the Hawking temperature tends
to \(1/(4\pi r_h)\). At the same time, the Yang--Mills node moves rapidly outwards and the asymptotic gauge tail extends over progressively larger radial scales. These results suggest that the limit \(b\rightarrow0\) is characterized not by a finite termination of the branch, but by a decoupling limit in which the exterior geometry approaches Schwarzschild while the colored structure is displaced toward infinity.

Because of this increasingly long gauge-field tail, reliable shooting at very small \(b\) requires an outer boundary much larger than that used for the representative profiles. We therefore use the
\(b<0.1\) solutions only to establish the qualitative continuation of the branch and do not quote their global quantities at the precision adopted in Table~\ref{tab:numerical-solutions}. Thus, within the parameter range explored, increasing the logarithmic nonlinearity drives the geometry toward the Schwarzschild limit while displacing the colored Yang--Mills structure to progressively larger radial scales.

The radial profiles shown in Fig.~\ref{fig:wprofiles} illustrate the initial deformation away from the EYM regime over the representative range \(0.25\leq b\leq100\). All curves interpolate between a positive
horizon value and the vacuum \(w=-1\), in agreement with the analytic nodal restriction derived in Sec.~\ref{sec:nodal}. The single zero is
simple, and its outward displacement is already pronounced for the smallest value displayed, \(b=0.25\). The solutions with \(b<0.25\) are not included in this figure because their increasingly extended
gauge-field tails require substantially larger radial domains. At large radius, each profile approaches the \(1/r\) colored-field decay derived in Sec.~\ref{sec:asymptotic-expansion}, with a solution-dependent asymptotic coefficient.
\begin{figure}[!htbp]
\centering
\includegraphics[width=0.68\linewidth]
{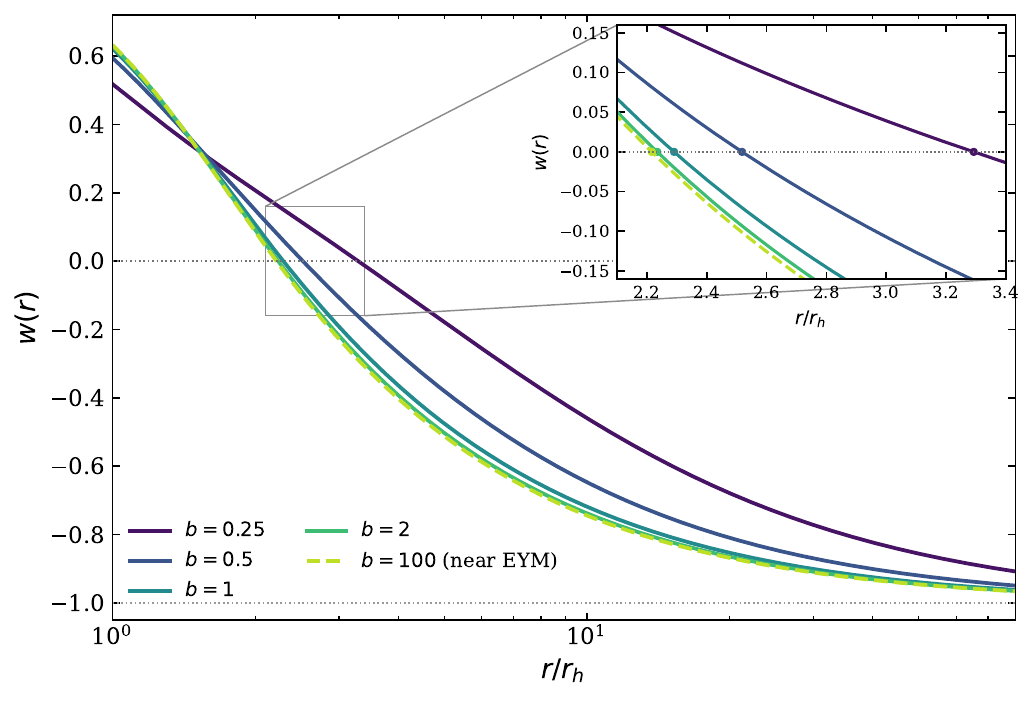}
\caption{Radial gauge-field profiles \(w(r)\) for the fundamental
one-node logarithmic Yang--Mills colored black holes with \(r_h=1\).
The inset enlarges the nodal region. Decreasing \(b\) strengthens the
logarithmic nonlinear effects, lowers \(w_h\), and shifts the node away
from the event horizon. The \(b=100\) curve represents the near-EYM
regime. All profiles approach \(w=-1\) asymptotically.}
\label{fig:wprofiles}
\end{figure}

The corresponding metric functions are displayed in
Fig.~\ref{fig:metricprofiles}. In every case, \(N(r)\) vanishes
linearly at the event horizon and remains positive throughout the
exterior. The normalized redshift function also remains positive and
approaches unity monotonically. For smaller \(b\), the reduction of
the total mass raises \(N(r)\) at fixed radius, while \(S(r)\) remains
closer to its asymptotic value. The logarithmic interaction therefore
produces its largest geometric effect in the near-horizon and
intermediate regions, consistently with its strong suppression in the
far-field expansion.

\begin{figure}[!htbp]
\centering
\includegraphics[width=\linewidth]
{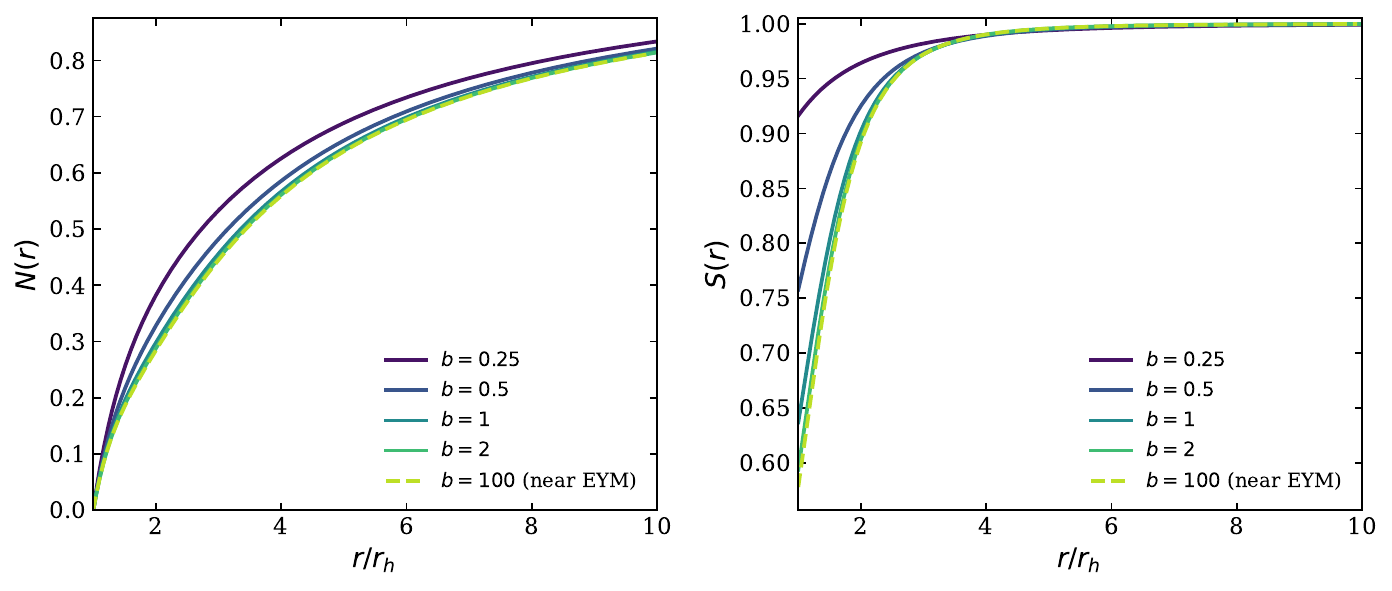}
\caption{Metric functions \(N(r)\) (left panel) and \(S(r)\)
(right panel) for the fundamental one-node colored black holes with
\(r_h=1\). The redshift function is normalized by \(S(\infty)=1\).
Strong logarithmic nonlinearities, corresponding to smaller \(b\),
reduce the asymptotic mass and weaken the variation of \(S(r)\) outside
the horizon. The profiles converge rapidly toward the near-EYM behavior
as \(b\) increases.}
\label{fig:metricprofiles}
\end{figure}

To examine the dependence on the horizon scale, we next vary \(r_h\)
while keeping \(b\) fixed. The resulting families are shown in
Fig.~\ref{fig:parameterdependence}. The ADM mass grows monotonically
with \(r_h\), whereas the normalized node position exhibits a
nonmonotonic dependence whose minimum shifts with \(b\). The Hawking
temperature displays a richer structure: for sufficiently weak
logarithmic deformation, it develops one or more turning points. These
turning points will be analyzed thermodynamically in the next section.

The separation among the curves is largest for small \(r_h\), where
the Yang--Mills invariant is strongest near the horizon. As \(r_h\)
increases, the characteristic field strength decreases relative to
the nonlinear scale \(b\), and the families approach one another.
This confirms that the logarithmic deformation primarily affects the
compact, strong-field portion of the colored branches.

\begin{figure}[!htbp]
\centering
\includegraphics[width=\linewidth]
{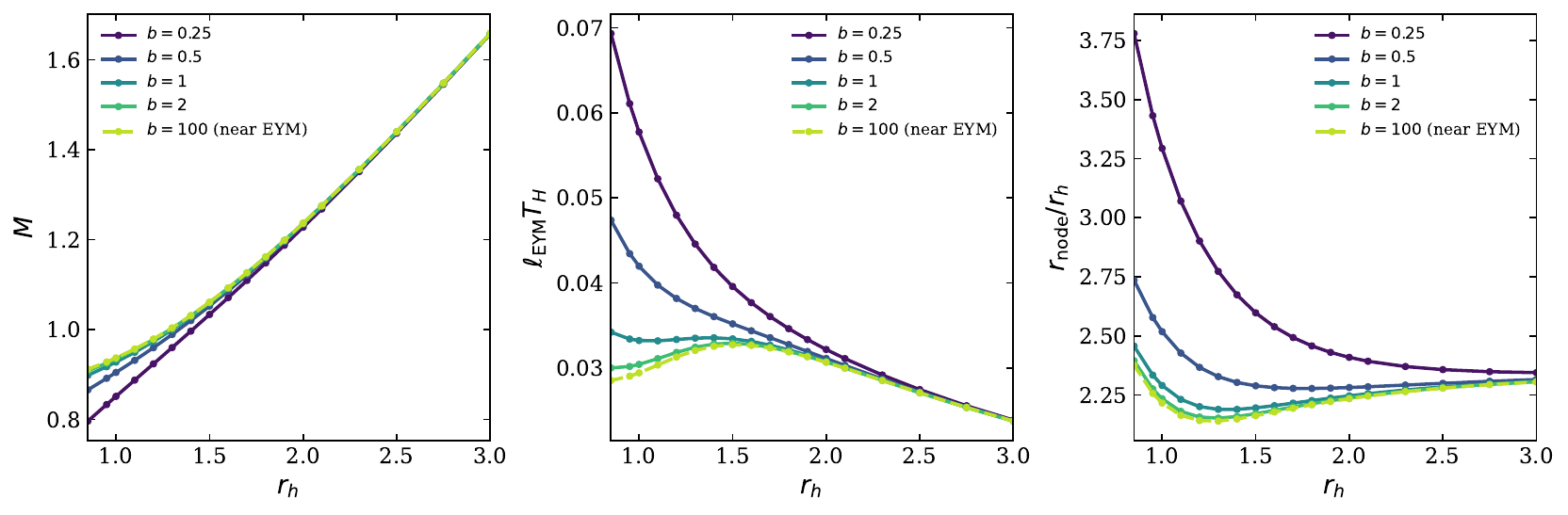}
\caption{Dimensionless ADM mass \(M\), Hawking temperature
\(\ell_{\rm EYM}T_H\), and normalized node position
\(r_{\rm node}/r_h\) as functions of the horizon radius for the
fundamental one-node colored branch. Each curve corresponds to a fixed
value of \(b\). Nonlinear effects are most pronounced for small
horizons and small \(b\). For weaker logarithmic deformation, the
temperature develops turning points, while the normalized node
position exhibits a \(b\)-dependent minimum.}
\label{fig:parameterdependence}
\end{figure}

\FloatBarrier

The present construction is restricted to the fundamental \(n=1\)
branch. In the EYM limit, colored black holes form a discrete tower
labeled by
\begin{equation}
 n=1,2,3,\ldots.
\end{equation}
The same shooting strategy can, in principle, be extended by requiring
\(n\) simple crossings and
\begin{equation}
 w(\infty)=(-1)^n
\end{equation}
for solutions beginning with \(w_h>0\). However, the existence and
global structure of the higher-node logarithmic branches have not been
established here. In particular, determining whether they possess a
critical value of \(b\), multiple coexisting solutions at fixed
\((b,r_h)\), or additional branch endpoints requires a separate
systematic survey.

\section{Internal geometry and horizon structure}
\label{sec:interior}

Having constructed the exterior one-node solutions, we now investigate
their continuation across the event horizon. In particular, we examine
whether the logarithmic Yang--Mills black holes possess additional
zeros of \(N(r)\), corresponding to inner horizons, and determine their
leading behavior near the central singularity.

A direct inward integration of the explicit second-order equation for
\(w(r)\) becomes unreliable at small radii. The radial-gradient and
angular contributions to
\begin{equation}
 X=
 \frac{Nw'^2}{r^2}
 +\frac{(1-w^2)^2}{2r^4}
\end{equation}
become individually large and nearly cancel, producing a substantial
loss of floating-point precision. We therefore reformulate the gauge
sector in terms of the flux variable
\begin{equation}
 \mathcal{H}(r)=N(r)P(r)w'(r)
 \label{eq:interior_flux}
\end{equation}
and the logarithmic argument
\begin{equation}
 \Xi(r)=1+\frac{X(r)}{b^2}
 =\frac{1}{P(r)}.
 \label{eq:Xi_definition}
\end{equation}
Equation~\eqref{eq:interior_flux} gives
\begin{equation}
 w'=\frac{\mathcal{H}\Xi}{N}.
 \label{eq:wprime_flux}
\end{equation}
Substitution into the definition of \(X\) yields an algebraic equation
for \(\Xi\),
\begin{equation}
 \mathcal{A}\Xi^2-\Xi+\mathcal{B}=0,
 \label{eq:Xi_quadratic}
\end{equation}
where
\begin{equation}\label{CursiveAandB}
 \mathcal{A}
 =
 \frac{\mathcal{H}^2}{b^2Nr^2},
 \qquad
 \mathcal{B}
 =
 1+\frac{(1-w^2)^2}{2b^2r^4}.
\end{equation}
The root continuously connected to the exterior solution is
\begin{equation}
 \Xi=
 \frac{2\mathcal{B}}
 {1+\sqrt{1-4\mathcal{A}\mathcal{B}}}.
 \label{eq:Xi_stable}
\end{equation}
This form is numerically stable and avoids evaluating \(X\) as the
difference of two nearly equal large quantities.

The interior system can consequently be written as
\begin{align}
 m'&=r^2b^2\ln\Xi,
 \label{eq:interior_mprime}\\
 \frac{S'}{S}&=\frac{2w'^2}{\Xi r},
 \label{eq:interior_Sprime}\\
 w'&=\frac{\mathcal{H}\Xi}{N},
 \label{eq:interior_wprime}\\
 \mathcal{H}'&=
 \frac{w(w^2-1)}{\Xi r^2}
 -\frac{S'}{S}\mathcal{H}.
 \label{eq:interior_Hprime}
\end{align}
For the numerical integration, we introduce
\begin{equation}
 u=-\ln\left(\frac{r}{r_h}\right),
 \label{eq:interior_coordinate}
\end{equation}
which resolves the increasingly small radial scales near the center.
The integration is initialized at \(r=r_h-\epsilon\) using the
second-order horizon expansion and the full-precision values of \(w_h\)
obtained from the exterior shooting problem.

For all representative solutions with
\begin{equation}
 b=0.25,\quad0.5,\quad1,\quad2,\quad100,
\end{equation}
the integration reaches
\begin{equation}
 \frac{r}{r_h}=10^{-8}.
\end{equation}
No additional zero of \(N(r)\) is detected, and \(\Xi(r)\) remains
positive throughout the integration domain. Repeating the calculation
with two different integration tolerances produces relative variations
below \(9\times10^{-9}\) in the evolved variables.

The numerical solutions approach the asymptotic form
\begin{align}
 m(r)&=
 m_0+
 \frac{b^2r^3}{3}
 \left[
 \ln\left(\mathcal{K}r^{-3/2}\right)
 +\frac12
 \right]
 +o\!\left(r^3\ln r\right),
 \label{eq:interior_asymptotic_m}\\
 w(r)&=
 w_0-2\gamma\sqrt{r}
 +\mathcal{O}(r),
 \label{eq:interior_asymptotic_w}\\
 \mathcal{H}(r)&=
 \mathcal{H}_0+\mathcal{O}(\sqrt{r}),
 \label{eq:interior_asymptotic_H}\\
 \Xi(r)&=
 \mathcal{K}r^{-3/2}
 \left[1+\mathcal{O}(\sqrt{r})\right],
 \label{eq:interior_asymptotic_Xi}\\
 S(r)&=
 S_0\left[
 1+\frac{4\gamma^2}{\mathcal{K}}\sqrt{r}
 +\mathcal{O}(r)
 \right].
 \label{eq:interior_asymptotic_S}
\end{align}
The leading coefficients satisfy
\begin{equation}
 \gamma=
 \frac{1-w_0^2}{2\sqrt{m_0}},
 \qquad
 \mathcal{K}=
 \frac{2m_0\gamma}{\mathcal{H}_0}.
 \label{eq:interior_coefficient_relations}
\end{equation}
At \(r/r_h=10^{-8}\), the coefficients extracted from the numerical
solutions agree with Eq.~\eqref{eq:interior_coefficient_relations} to
approximately \(3\times10^{-4}\).

Representative limiting values are listed in
Table~\ref{tab:interior_parameters}.

\begin{table}[htbp]
\centering
\caption{Limiting interior parameters of the representative one-node
solutions with \(r_h=1\).}
\label{tab:interior_parameters}
\begin{tabular}{ccc}
\hline\hline
\(b\) & \(m_0\) & \(w_0\)\\
\hline
0.25  & 0.459630 & 0.973648\\
0.50  & 0.448791 & 0.983250\\
1     & 0.458235 & 0.987379\\
2     & 0.473218 & 0.990431\\
100   & 0.478129 & 0.993458\\
\hline\hline
\end{tabular}
\end{table}

The corresponding metric function behaves as
\begin{equation}
 N(r)=
 -\frac{2m_0}{r}
 +1+\mathcal{O}\!\left(r^2\ln r\right).
 \label{eq:interior_asymptotic_N}
\end{equation}

This finiteness of $m_0$, despite $X\to\infty$, follows directly from Eq.~(126). Since
$\Xi=1+X/b^2\sim Kr^{-3/2}$, the mass equation gives
\begin{equation}
\begin{aligned}
m'&=b^2r^2\ln\Xi\
&\sim b^2r^2\left(\ln K-\frac{3}{2}\ln r\right)
\longrightarrow0,
\qquad r\to0.
\end{aligned}
\label{eq:mass_integrability}
\end{equation}
Thus, the logarithmic growth of the effective energy density remains integrable near the center, and $m(r)$ approaches the finite limit $m_0$. In the linear theory, by contrast, the mass source grows nonintegrably in the corresponding strong-field regime, providing the mechanism underlying the oscillatory mass-inflation behavior of ordinary EYM black holes. The logarithmic nonlinearity therefore suppresses mass inflation without rendering the center regular. Since $m_0\neq0$, a Schwarzschild-type curvature singularity survives, as confirmed explicitly by the curvature invariant below.

Since \(m_0>0\) for all the solutions considered, \(N(r)\) remains
strictly negative sufficiently close to the center. The numerical
integration establishes \(N<0\) from \(r/r_h=10^{-8}\) up to the
immediate interior of the event horizon, while the asymptotic
expansion covers the remaining interval toward \(r=0\). The
representative solutions therefore possess a single nonextremal event
horizon and no inner Cauchy horizon.
\begin{figure}[H]
\centering
\includegraphics[width=\linewidth]
{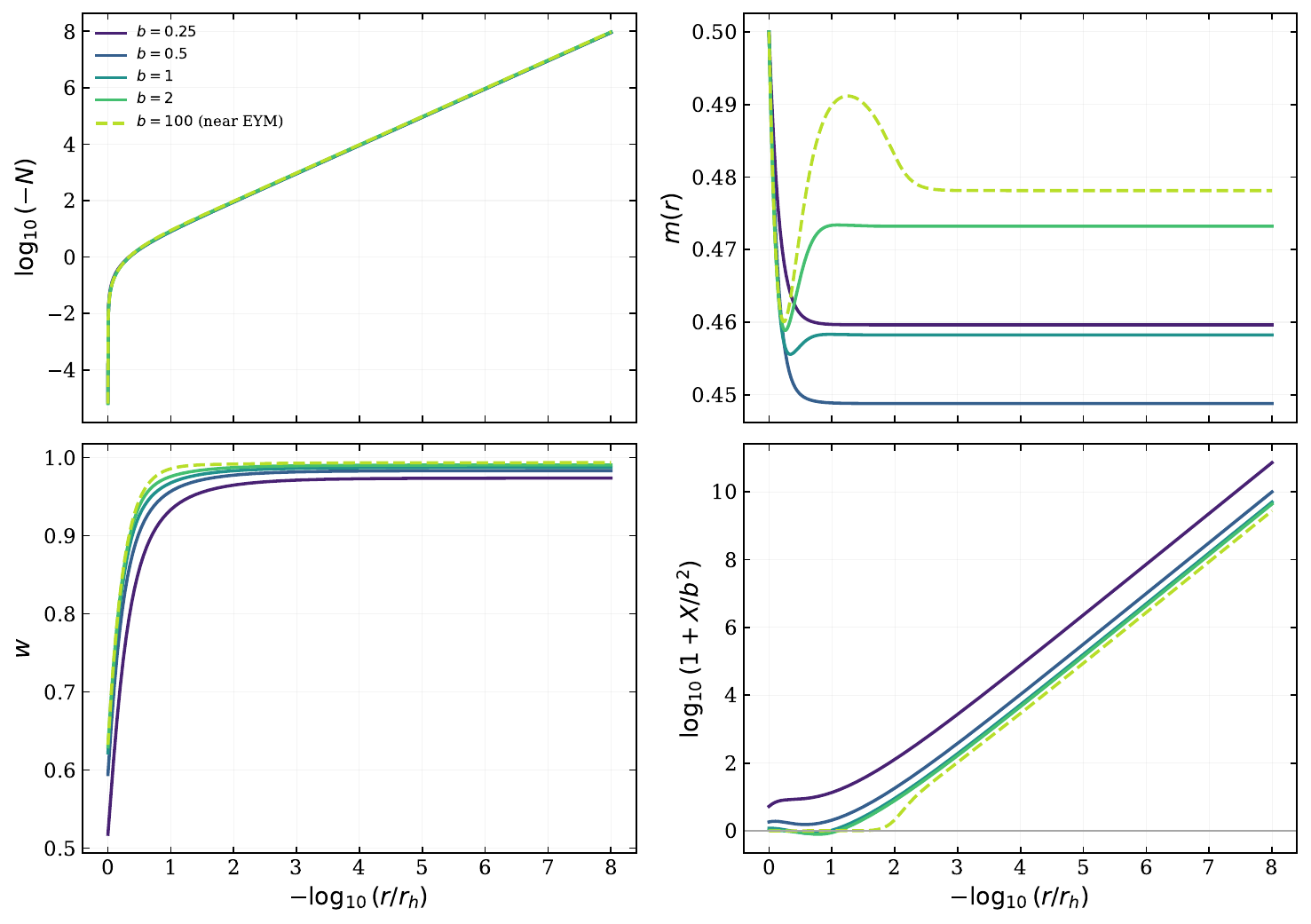}
\caption{Interior behavior of the fundamental one-node colored
black-hole solutions with \(r_h=1\), for
\(b=0.25,\,0.5,\,1,\,2,\) and \(100\). The horizontal coordinate is
\(-\log_{10}(r/r_h)\), extending from immediately inside the event
horizon to \(r/r_h=10^{-8}\). The upper-left panel shows that
\(\log_{10}(-N)\) approaches a straight line with unit slope,
corresponding to \(N\sim-2m_0/r\). The mass function approaches a
positive constant \(m_0\), while the gauge amplitude tends to a finite
value \(w_0\). The asymptotically linear behavior of
\(\log_{10}(1+X/b^2)\), with slope \(3/2\), confirms
\(1+X/b^2\sim\mathcal{K}r^{-3/2}\). No additional zero of \(N(r)\) is
found for the representative solutions considered.}
\label{fig:interior}
\end{figure}
\FloatBarrier
The limiting behavior also establishes that these solutions are not globally regular black holes. In dimensionless variables, the Kretschmann scalar has the leading behavior
\begin{equation}
 \ell_{\rm EYM}^{\,4}
 R_{\mu\nu\rho\sigma}R^{\mu\nu\rho\sigma}
 \sim
 \frac{48m_0^2}{r^6},
 \qquad r\rightarrow0,
 \label{eq:interior_Kretschmann}
\end{equation}
while the dimensionless Yang--Mills invariant diverges as
\begin{equation}
 X\sim b^2\mathcal{K}r^{-3/2}.
 \label{eq:interior_X}
\end{equation}
A regular center would instead require
\(m(r)=\mathcal{O}(r^3)\) and \(N(0)=1\). Since \(N<0\) throughout the interior, such a regular core is excluded. The interior therefore terminates at a Schwarzschild-type spacelike curvature singularity.

Although the \(b=100\) solution is already close to EYM in the
exterior, the convergence to the EYM solution is not uniform in the
radial coordinate. For every finite \(b\), the ratio \(X/b^2\)
eventually becomes large as \(r\rightarrow0\), and the solution enters
the genuinely logarithmic regime. Taking \(b\rightarrow\infty\) at
fixed radius therefore yields the strictly linear EYM equations,
whereas taking \(r\rightarrow0\) at fixed finite \(b\) yields the
asymptotic behavior in
Eqs.~\eqref{eq:interior_asymptotic_m}--\eqref{eq:interior_asymptotic_N}.
In this precise sense, the linear and central limits do not commute.
This nonuniformity explains why the deep interior at finite \(b\)
need not reproduce the oscillatory mass-inflation behavior of the
strictly linear EYM theory.

This behavior has a close qualitative parallel in non-Abelian
Einstein--Born--Infeld theory. Generic colored black holes in the
ordinary EYM model exhibit violent interior oscillations of the mass
function, whereas the Born--Infeld deformation was shown to suppress
this oscillatory regime~\cite{Dyadichev:2000dy}. Our results indicate
that the logarithmic deformation produces an analogous qualitative
effect: for every finite value of \(b\) examined, the interior
evolution approaches the nonoscillatory singular asymptotics derived
above. This suggests that the suppression of the characteristic EYM
interior oscillations is not peculiar to the Born--Infeld square-root
action, but may be a more general consequence of sufficiently strong
nonlinear gauge dynamics.

\section{Thermodynamic behavior and local stability}
\label{sec:thermo}

For the metric \eqref{eq:metric}, the Hawking temperature measured with
respect to the asymptotically normalized time coordinate is
\begin{equation}
 T_H=\frac{S_hN'_h}{4\pi\ell_{\rm EYM}}.
 \label{eq:temperaturephysical}
\end{equation}
In the numerical integration, the unnormalized horizon value is set to
\(S_h=1\), and the redshift function is subsequently divided by its
asymptotic value \(S_\infty\). The dimensionless temperature used below
is therefore
\begin{equation}
 \tau\equiv\ell_{\rm EYM}T_H
 =\frac{N'_h}{4\pi S_\infty}.
 \label{eq:temperaturedimensionless}
\end{equation}

Since the gravitational sector is described by the Einstein--Hilbert
action and the nonlinear Yang--Mills field is minimally coupled, the
black-hole entropy continues to satisfy the Bekenstein--Hawking area
law,
\begin{equation}
 S_{\rm BH}
 =
 \frac{\pi R_h^2}{G}
 =
 \frac{\pi\ell_{\rm EYM}^2r_h^2}{G}
 =
 \frac{4\pi^2}{g^2}r_h^2,
 \qquad
 R_h=\ell_{\rm EYM}r_h.
 \label{eq:entropyphysical}
\end{equation}
It is convenient to define the dimensionless entropy
\begin{equation}
 \mathcal{S}
 \equiv
 \frac{g^2}{4\pi}S_{\rm BH}
 =
 \pi r_h^2.
 \label{eq:entropydimensionless}
\end{equation}
The first law along a family at fixed \(b\) then takes the dimensionless
form
\begin{equation}
 \dd M=\tau\,\dd\mathcal{S}
 =2\pi r_h\tau\,\dd r_h.
 \label{eq:firstlawdimensionless}
\end{equation}
We verified Eq.~\eqref{eq:firstlawdimensionless} numerically, obtaining
a maximum relative discrepancy smaller than
\begin{equation}
 6.2\times10^{-4}
\end{equation}
over the families considered below.

The dimensionless heat capacity at fixed logarithmic parameter is
defined by
\begin{equation}
 C_b
 \equiv
 \left(\frac{\partial M}{\partial\tau}\right)_b
 =
 \frac{\dd M/\dd r_h}{\dd\tau/\dd r_h}
 =
 \frac{2\pi r_h\tau}{\dd\tau/\dd r_h}.
 \label{eq:heatcapacity}
\end{equation}
Its relation to the physical heat capacity is
\begin{equation}
 C_b^{\rm phys}
 =
 \left(\frac{\partial M_{\rm phys}}{\partial T_H}\right)_b
 =
 \frac{4\pi}{g^2}C_b.
 \label{eq:heatcapacityphysical}
\end{equation}
Since the prefactor is positive, the dimensionless and physical heat
capacities have the same sign and diverge at the same points.

We also introduce the dimensionless on-shell Helmholtz free energy
\begin{equation}
 \mathcal{F}
 \equiv
 \frac{G}{\ell_{\rm EYM}}F_{\rm phys}
 =
 M-\tau\mathcal{S}
 =
 M-\pi r_h^2\tau.
 \label{eq:freeenergy}
\end{equation}
The quantities \(C_b\) and \(\mathcal{F}\) were evaluated along the
one-node families by interpolating \(M(r_h)\) and \(\tau(r_h)\) at
fixed \(b\). The resulting thermodynamic behavior is displayed in
Fig.~\ref{fig:transitions}.

\begin{figure}[!htbp]
\centering
\includegraphics[width=\linewidth]
{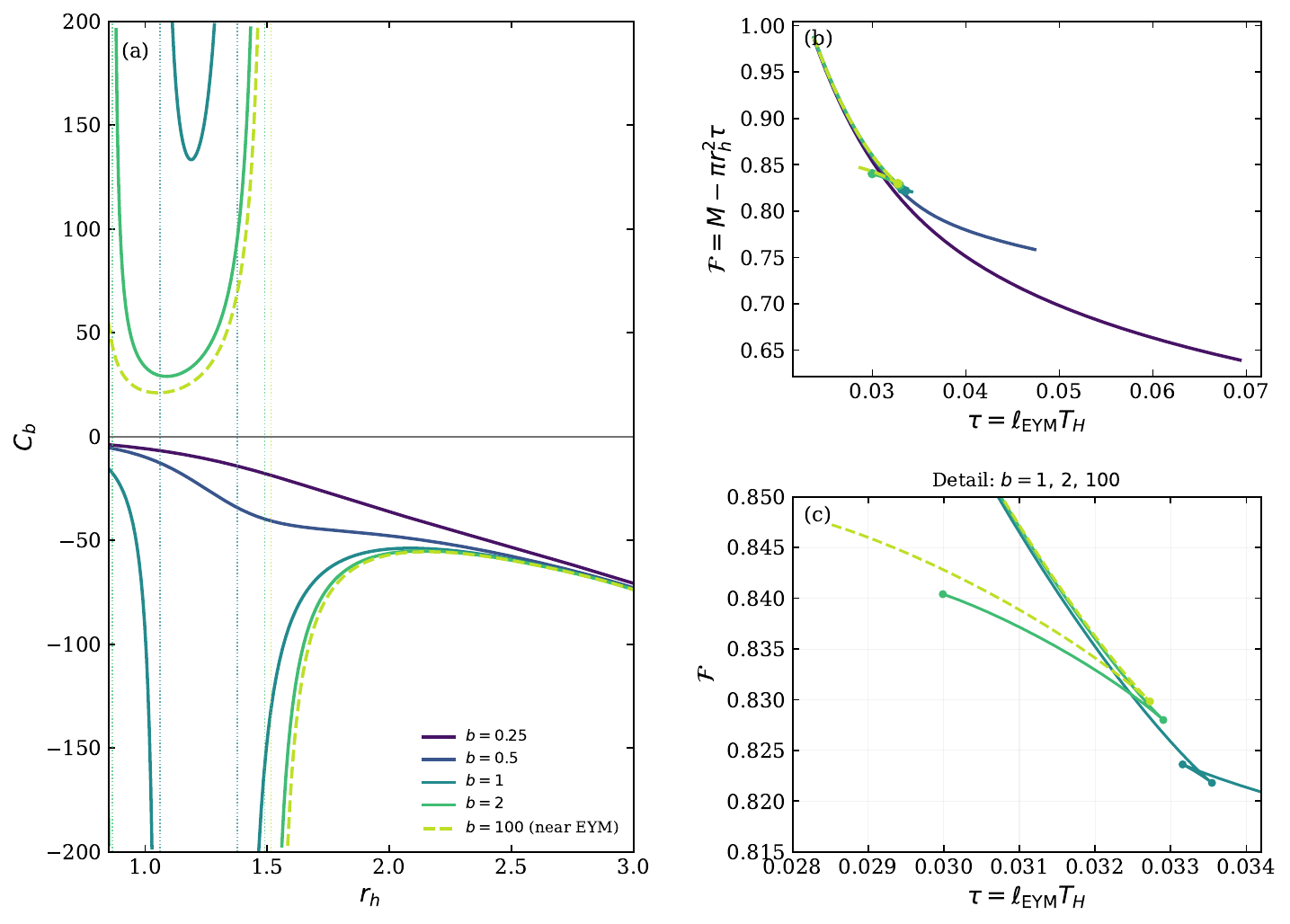}
\caption{Thermodynamic behavior of the one-node (\(n=1\)) colored
black-hole solutions for \(b=0.25,\,0.5,\,1,\,2,\) and \(100\), with
\(0.85\leq r_h\leq3\) and outer integration radius \(R=80\).
Panel (a) shows the dimensionless heat capacity
\(C_b=\dd M/\dd\tau\); the vertical dotted lines mark the temperature
turning points \(\dd\tau/\dd r_h=0\), where \(C_b\) diverges and its
sign changes. Panel (b) shows the dimensionless Helmholtz free energy
\(\mathcal{F}=M-\pi r_h^2\tau\) on the full scale, while panel (c)
enlarges the region containing the turning points for
\(b=1,\,2,\) and \(100\). The case \(b=100\) represents the near-EYM
limit.}
\label{fig:transitions}
\end{figure}
For \(b=0.25\) and \(0.5\), the temperature decreases throughout the
investigated interval, so that the corresponding heat capacity is
negative. Within the local canonical criterion, these branches are
therefore thermodynamically unstable. For \(b=1\), the temperature
develops a local minimum followed by a local maximum. The intermediate
branch between these turning points has \(C_b>0\), whereas the branches
on either side have \(C_b<0\).

A similar three-branch structure occurs for \(b=2\), with two
temperature turning points separating intervals of alternating signs
of the heat capacity. The lower turning point persists after extending
the numerical family toward smaller horizons and refining the
\(r_h\)-grid. In the near-EYM case \(b=100\), the temperature increases
up to a single maximum and then decreases, producing a
positive-heat-capacity branch at smaller \(r_h\) and a
negative-heat-capacity branch at larger \(r_h\). 

This qualitative pattern, namely, a positive-specific-heat branch bounded by two
temperature turning points, flanked by negative-specific-heat branches on
either side, is already known for ordinary EYM colored black holes and
related Einstein--Yang--Mills--dilaton solutions~\cite{Torii:1993vm}. Our
near-EYM case $b=100$ therefore reproduces the expected thermodynamic shape
of the $b\to\infty$ limit, providing a qualitative cross-check of the
thermodynamic branch structure that complements the quantitative benchmark
of Sec.~\ref{sec:numerics}.

The numerical locations of the temperature extrema are summarized in
Table~\ref{tab:critical_points}.

\begin{table}[htbp]
\centering
\caption{Temperature turning points of the fundamental one-node
colored black-hole families. The critical radii satisfy
\(d\tau/dr_h=0\), where the canonical heat capacity diverges.
The quoted precision reflects variations under refinement of the
horizon-radius grid and changes of the outer integration boundary.}
\label{tab:critical_points}
\begin{tabular}{ccc}
\hline\hline
\(b\) & \(r_h^{\rm crit}\) & \(\tau_{\rm crit}\) \\
\hline
1   & 1.0608 & 0.033159 \\
1   & 1.3776 & 0.033546 \\
2   & 0.8655 & 0.029987 \\
2   & 1.4901 & 0.032903 \\
100 & 1.5162 & 0.032727 \\
\hline\hline
\end{tabular}
\end{table}

The robustness of these turning points was examined by refining the
horizon-radius grid and by repeating the calculation with different
outer boundaries. For \(b=1\), reducing the grid spacing from
\(\Delta r_h=0.025\) to \(0.0125\) changes the extracted critical
radii by less than \(3\times10^{-6}\). Increasing the outer boundary
from \(R=80\) to \(R=120\) produces shifts below \(6\times10^{-5}\).
A comparable local refinement confirms the lower turning point for
\(b=2\): direct neighboring values of \(\tau(r_h)\) decrease toward
\(r_h\simeq0.8655\) and increase again below it. Hence this minimum is
not generated solely by spline interpolation. The number of digits
reported in Table~\ref{tab:critical_points} has been restricted
accordingly.

At each critical radius, the heat capacity diverges and changes sign,
separating branches with different local thermodynamic stability.
These points can therefore be interpreted as Davies-type local phase
transitions, or, more precisely, as transitions of local
thermodynamic stability.

This local interpretation should not be confused with a global phase
transition in a canonical ensemble. Asymptotically flat black holes
cannot, in general, remain in stable thermal equilibrium with a heat
bath at infinity, and the on-shell free energy
\(\mathcal{F}\) does not by itself define a globally well-posed
canonical ensemble. Moreover, Fig.~\ref{fig:transitions} does not
display an unambiguous free-energy crossing between two globally
stable competing phases. A genuine global phase-transition analysis
would require an additional thermodynamic prescription, such as
placing the black hole inside a finite cavity or extending the model
to an asymptotically AdS background.

\section{Conclusions}
\label{sec:conclusions}

We have constructed static, spherically symmetric, and asymptotically
flat colored black holes in four-dimensional Einstein gravity coupled
to logarithmic nonlinear \(SU(2)\) Yang--Mills theory. In contrast with
the analytical black holes obtained from the Wu--Yang magnetic ansatz,
the present configurations contain a dynamical radial amplitude
\(w(r)\) and arise from the full coupled nonlinear boundary-value
problem. Their gauge field approaches a vacuum value,
\(w(\infty)=\pm1\), and therefore carries no independently specifiable
asymptotic Yang--Mills magnetic charge. The resulting non-Abelian hair
is instead characterized by the gauge profile and its discrete node
number.

The local analysis at the event horizon provided the regular initial
data required for the numerical shooting construction. At infinity,
the solutions approach the Schwarzschild geometry at leading order,
but retain a definite gravitational imprint of the colored field. The
asymptotic gauge amplitude decays as \(1/r\), while the leading
hair-induced correction to the metric appears at order \(r^{-4}\).
The explicitly logarithmic contribution is much more strongly
suppressed, entering the metric only at order
\(\mathcal{O}(b^{-2}r^{-10})\). The coefficient of the gauge-field tail
is not an independent conserved charge, but secondary hair fixed by
the horizon data and the nonlinear shooting condition.

An integral identity establishes that a nontrivial solution cannot
remain sign-definite while approaching a Yang--Mills vacuum in the
same sign sector. Consequently, colored configurations connecting the
two vacuum sectors must possess at least one zero of \(w(r)\). We
constructed the fundamental one-node branch and verified that it
approaches continuously the ordinary EYM colored black hole as
\(b\rightarrow\infty\). A direct calculation in the strictly linear
theory reproduces the published EYM horizon value and agrees with the
\(b=100\) solution at the level of approximately \(2.1\times10^{-5}\)
or better for the quantities considered, providing an independent
validation of the numerical implementation.

The logarithmic interaction produces substantial modifications in the
strong-field regime. At fixed \(r_h=1\), decreasing \(b\) lowers the
ADM mass, raises the Hawking temperature, and displaces the
Yang--Mills node away from the horizon. The fundamental branch was
continued down to \(b=2\times10^{-3}\), with no evidence of a finite
lower critical value within the explored interval. As \(b\) decreases,
the exterior geometry approaches Schwarzschild on any fixed radial
domain, whereas the node and the colored-field tail are pushed toward
progressively larger radii. This behavior points to a nonuniform
decoupling limit in which the gravitational geometry becomes locally
Schwarzschild-like while the non-Abelian structure survives at
increasingly distant scales. Establishing the strict \(b\rightarrow0\)
limit would, however, require numerical control over an unbounded
radial domain.

Some of the qualitative conclusions extend beyond the particular
logarithmic Lagrangian considered here. The nodal restriction derived
in Sec.~\ref{sec:nodal} relies only on the positivity of \(P=-\mathcal{L}_X\), whereas
the suppression of interior mass inflation is associated with the
sufficiently slow growth of \(\mathcal{L}(X)\) at large field strength.
These results suggest that both features may persist within a broader
class of nonlinear Yang--Mills theories, although their precise range
of validity must be established separately for each nonlinear
deformation.

The continuation through the event horizon reveals a qualitatively
different interior from that of the strictly linear EYM black hole.
For all representative finite values of \(b\), no additional zero of
\(N(r)\) was found, so the solutions possess a single nonextremal event
horizon and no inner Cauchy horizon. Near the center, the mass function
approaches a positive finite value, the gauge amplitude remains
finite, and the logarithmic invariant diverges according to
\(1+X/b^2\sim Kr^{-3/2}\). The center is therefore not regular:
the Kretschmann scalar behaves as \(48m_0^2/r^6\), corresponding to a
Schwarzschild-type spacelike curvature singularity. The logarithmic
nonlinearity suppresses the oscillatory mass-inflation behavior
characteristic of ordinary EYM interiors, without resolving the
central singularity. Moreover, the limits \(b\rightarrow\infty\) and
\(r\rightarrow0\) do not commute, since every finite-\(b\) solution
eventually enters the genuinely logarithmic regime sufficiently close
to the center.

The thermodynamic families display turning points of the Hawking
temperature and corresponding divergences of the heat capacity. For
\(b=1\) and \(b=2\), two turning points separate three branches with
alternating signs of \(C_b\), whereas the near-EYM case \(b=100\)
contains a single temperature maximum. These turning points are robust
under refinement of the horizon-radius grid and variation of the
outer numerical boundary. They may be interpreted as Davies-type
transitions of local thermodynamic stability. They do not, however,
establish a global phase transition: an asymptotically flat black hole
does not define a globally stable canonical ensemble without an
additional prescription, and the calculated Helmholtz free energy
shows no unambiguous crossing between globally stable competing
phases.

The present results establish the existence and principal geometric
properties of the fundamental colored branch, but do not determine
its dynamical stability. Since the solutions approach the known
unstable asymptotically flat EYM colored black holes as
\(b\rightarrow\infty\), the corresponding gravitational and
sphaleronic unstable modes are expected to persist at least in the
near-EYM regime. Whether sufficiently strong logarithmic
nonlinearities modify the number or growth rates of these modes
requires a dedicated linear perturbation analysis. Further natural
extensions include constructing the higher-node branches, mapping
their complete domains of existence, and investigating the theory in
a finite cavity or an asymptotically AdS background, where a
well-defined global thermodynamic phase structure can be formulated.

\section*{Acknowledgements}
\noindent The author would like to thank the Conselho Nacional de Desenvolvimento Cient\'{i}fico e Tecnol\'{o}gico (CNPq) for partial financial support, through grant 301122/2025-3.

\bibliographystyle{unsrt}
\bibliography{Refer}

\end{document}